\documentclass[10pt]{article}

\usepackage{amsmath, amsthm, amssymb}
\usepackage{geometry}
\usepackage{booktabs}
\usepackage{hyperref}
\usepackage{graphicx}
\usepackage{adjustbox}
\usepackage[numbers]{natbib}
\usepackage{rotating}
\usepackage{floatpag}

\theoremstyle{plain}
\newtheorem{theorem}{Theorem}
\newtheorem{lemma}[theorem]{Lemma}

\newtheorem{proposition}[theorem]{Proposition}

\theoremstyle{definition}
\newtheorem{definition}[theorem]{Definition}

\theoremstyle{remark}

\newcommand{\Z}{\mathbb{Z}}
\newcommand{\Aut}{\mathrm{Aut}}
\DeclareMathOperator{\lcm}{lcm}

\title{Enumerating Two-Orbit Graphs}
\author{David Seka and Stefan Szeider\\[4pt]
  \small Algorithms and Complexity Group\\[-3pt]
  \small TU Wien, Vienna, Austria\\[-3pt]
  \small \href{mailto:seka@ac.tuwien.ac.at}{\texttt{\char`\{seka,sz\char`\}@ac.tuwien.ac.at}}
}
\date{}

\begin{document}

\maketitle
\thispagestyle{empty}

\begin{abstract}
We present an approach to enumerate graphs whose automorphism group
has exactly two orbits.  Our method exploits the
observation that we can enumerate all graphs whose automorphism group
contains a given this permutation group.  We obtain the relevant
groups via Goursat's lemma.  In order to scale the enumeration, we
employ additional optimizations that prune irrelevant groups.
  In total, we enumerate, for the first time, all connected two-orbit graphs of up to $27$ vertices, totaling $10{,}094{,}721$ graphs, pushing the state of the art well beyond what direct enumeration methods can achieve.
\end{abstract}

\section{Introduction}

One-orbit graphs, or vertex-transitive graphs, are graphs whose automorphism group is transitive.
Many prominent graphs, such as the clique $K_n$, the complete bipartite graph $K_{n,n}$, or the Petersen graph, are one-orbit graphs.
A natural generalization is the class of two-orbit graphs, where the automorphism group has exactly two orbits on the vertices.
They are of interest as they tend to have large automorphism groups while being considerably more numerous.

Directly enumerating two-orbit graphs by generating all non-isomorphic graphs and testing their orbit structure is infeasible beyond very small sizes, since the total number of graphs grows super-exponentially.
However, by exploiting the group structure of the automorphism group, we reduce the search space dramatically, enabling enumeration up to $27$ vertices.

In this paper, we present a systematic approach to enumerating two-orbit graphs.
Ba\v{s}i\'{c} et al.~\cite{basic_fowler_pisanski} published counts of two-orbit graphs for certain orders, citing private communication with Royle.
However, neither the method used to obtain these counts nor a published list of the graphs appears to be available.

The closest related work is the census of transitive groups and vertex-transitive graphs by Holt and Royle~\cite{holt_royle}, which enumerates one-orbit graphs by working through a catalog of transitive permutation groups.
Our approach generalizes this strategy to two-orbit graphs by using Goursat's lemma to systematically generate candidate groups.
Orbital graph theory, which studies graphs whose edge sets are unions of orbitals of a permutation group, underpins our edge-orbit-based generation method.

\section{Alternative Approaches}

Before introducing our main approach, we describe two simpler methods we explored that motivated the current work.

The first is straightforward: enumerate all graphs up to isomorphism via nauty~\cite{nauty} and compute the number of orbits.
This approach is simple to implement and can be extended to more orbits.
Unfortunately, the total number of graphs explodes combinatorially.
Enumeration beyond $11$ vertices is computationally infeasible with this approach.

We therefore employed SMS~\cite{sms}, a SAT-based framework for graph enumeration that supports additional constraints on parameters, such as vertex degree.
For example, it is possible to constrain the degree of a vertex or even how many neighbors a vertex has among some subset of vertices.
We split the computation into multiple stages.
First, we fix the orbit sizes $k$ and $l$.
Then, we specify the \emph{orbit matrix}, which indicates how many neighbors each vertex in each orbit has among the vertices of that orbit.
Another parameter we fix is the number of triangles each vertex participates in within each orbit.
Furthermore, we detect whenever the subgraph induced on an orbit is fully determined and prune partial solutions where this is not the case.
These optimizations enabled enumeration of up to $16$ vertices, which required a parallel computational cluster to complete within a reasonable time.
These limitations motivated the group-theoretic approach described in Section~\ref{sec:groups}.

\section{From Groups to Graphs}
\label{sec:groups}
Enumerating the graphs that correspond to an automorphism group proceeds via edge orbits.

\begin{lemma}
  Let $G$ be a graph and $a \in \Aut(G)$.
  Then $\{v,w\} \in E(G)$ if and only if $\{a(v), a(w)\} \in E(G)$.
\end{lemma}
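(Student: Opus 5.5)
The plan is to unwind the definition of an automorphism. An automorphism $a \in \Aut(G)$ is a bijection $a\colon V(G)\to V(G)$ such that for all $v,w\in V(G)$, $\{v,w\}\in E(G)$ holds exactly when $\{a(v),a(w)\}\in E(G)$. If that is the definition in force, the statement is immediate. So the only real work is to handle the other common form of the definition, where $a$ is a bijection satisfying only the forward implication, i.e.\ $a$ maps edges to edges. I will derive the reverse implication from that weaker assumption.

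The forward direction, $\{v,w\}\in E(G)\Rightarrow\{a(v),a(w)\}\in E(G)$, is exactly the hypothesis. For the converse, I first note that $a$ induces a map $a_*\colon E(G)\to E(G)$, $\{v,w\}\mapsto\{a(v),a(w)\}$. It is well defined because unordered pairs are sent to unordered pairs. It is injective because $a$ is injective on vertices, so distinct pairs have distinct images.

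Since $G$ is finite (as throughout the paper), $E(G)$ is finite, so the injective map $a_*$ is a bijection of $E(G)$. Now suppose $\{a(v),a(w)\}\in E(G)$. By surjectivity of $a_*$ there is an edge $\{x,y\}\in E(G)$ with $\{a(x),a(y)\}=\{a(v),a(w)\}$. Injectivity of $a$ then gives $\{x,y\}=\{v,w\}$, so $\{v,w\}\in E(G)$.

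An alternative route is to use that $\Aut(G)$ is a group, so $a^{-1}\in\Aut(G)$. Applying the forward implication to $a^{-1}$ and the edge $\{a(v),a(w)\}$ gives $\{v,w\}\in E(G)$ directly. I would mention this as the cleaner argument whenever closure under inverses is part of the setup.

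The only step with any content is the counting argument. Finiteness is genuinely needed there: for infinite graphs, an edge-preserving bijection need not reflect non-edges. Everything else is routine.
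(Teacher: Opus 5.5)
Your proposal is correct. The paper's proof is a one-liner that assumes the strong definition: an automorphism is a vertex permutation leaving the adjacency matrix invariant, so the $(a(v),a(w))$ entry equals the $(v,w)$ entry for all $v,w$. That gives both implications at once, and it is exactly your first paragraph.

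Everything after that goes beyond the paper. You derive the reverse implication from the weaker convention (a bijection sending edges to edges). The induced map $a_*$ on the finite set $E(G)$ is injective, hence surjective, and you then pull back along the injective vertex map. This makes the lemma independent of which definition is in force and correctly isolates finiteness as the only substantive ingredient, at the cost of a few lines the paper does not need. One small precision: $a_*$ lands in $E(G)$ because of the forward hypothesis, not merely because unordered pairs go to unordered pairs.

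Be careful with your alternative route via $a^{-1}\in\Aut(G)$. Under the weak definition, closure of $\Aut(G)$ under inverses is equivalent to the lemma itself, so that route is circular there. It is only sound under the strong definition, where the lemma is already immediate. In the weak setting, your counting argument is what actually shows that $\Aut(G)$ is a group.
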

\begin{proof}
  Each element of $\Aut(G)$ permutes the vertices so that the adjacency matrix remains constant.
\end{proof}

\begin{definition}
  Let $H$ be a permutation group acting on a vertex set $V$.
  The \emph{edge orbits} of~$H$ are the orbits of~$H$ acting on the set $\binom{V}{2}$ of unordered vertex pairs.
\end{definition}

This means we can partition the set of vertex pairs into edge orbits by grouping pairs that lie in the same orbit under a given permutation group $H$.
Now, we generate all graphs invariant under $H$ by iterating over all subsets of edge orbits.
Indeed, this captures every graph whose automorphism group is a supergroup of $H$, as such a group produces a coarser partition of the edge set.
This method produces a large number of duplicate graphs, so we use nauty~\cite{nauty} to deduplicate them via their canonical form.

\begin{definition}
  A two-orbit permutation group $P$ acting on $\Omega = \Omega_1 \sqcup \Omega_2$ is called \emph{minimal} if it has no proper subgroup that is transitive on both $\Omega_1$ and $\Omega_2$.
\end{definition}

Since every two-orbit automorphism group contains a minimal two-orbit subgroup, and any graph invariant under the full automorphism group is also invariant under that minimal subgroup, it suffices to enumerate graphs arising from minimal two-orbit groups.

\subsection{Enumerating Relevant Groups}
While transitive permutation groups have been cataloged up to degree $47$~\cite{holt_royle}, the same cannot be said about two-orbit groups.
Such a group $C$ must be a subgroup of the direct product $A \times B$ of two transitive groups where the projection to either group is surjective.
We call a subgroup of the direct product of two groups that projects surjectively onto both factors a \emph{subdirect product}.
Goursat's Lemma characterizes such groups.

\begin{theorem}[Goursat's Lemma {\cite[p.~75]{lang}}]
Let $A, B$ be groups.
Each subdirect product $C$ can be identified by $(N_A, N_B, \phi)$, where $N_A \trianglelefteq A$, $N_B \trianglelefteq B$ and $\phi$ is an isomorphism between $A / N_A$ and $B / N_B$.
The subgroup is reconstructed as $C = \{(a,b) \in A \times B : \phi(aN_A) = bN_B\}$.
The normal subgroup $N_A$ corresponds to the kernel of the projection to $B$ and $N_B$ to the kernel of the projection to $A$.
\end{theorem}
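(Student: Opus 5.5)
Starting from a subdirect product $C \le A \times B$, I will extract the triple and then show that the triple gives back $C$. Let $\pi_A, \pi_B$ be the two projections. By hypothesis both are surjective. Put
\[
N_B' = \ker(\pi_A|_C) = \{(1,b) \in C\}, \qquad N_A' = \ker(\pi_B|_C) = \{(a,1) \in C\}.
\]
Set $N_A = \{a : (a,1) \in C\}$ and $N_B = \{b : (1,b) \in C\}$. These are the projections of the two kernels, so they are subgroups.

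\textbf{Normality.} I show $N_A \trianglelefteq A$. Take $a \in N_A$ and any $x \in A$. Surjectivity of $\pi_A$ gives some $y$ with $(x,y) \in C$. Then
\[
(x,y)(a,1)(x,y)^{-1} = (xax^{-1},1) \in C,
\]
so $xax^{-1} \in N_A$. The argument for $N_B$ is symmetric. This is the only place surjectivity onto $A$ is needed, and it is the key point.

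\textbf{Defining $\phi$.} Set $\phi(aN_A) = bN_B$ whenever $(a,b) \in C$; every coset gets some $b$ by surjectivity of $\pi_A$. The main thing to check is that this is well defined.
\begin{itemize}
\item[(i)] If $(a,b), (a,b') \in C$, then $(1, b^{-1}b') = (a,b)^{-1}(a,b') \in C$. So $b^{-1}b' \in N_B$, and the image does not depend on the choice of $b$.
\item[(ii)] If $a' = an$ with $n \in N_A$, then $(a',b) = (a,b)(n,1) \in C$. So the same $b$ works for every representative of the coset.
\end{itemize}
The map $\phi$ is a homomorphism because $C$ is closed under products. It is surjective because $\pi_B$ is surjective. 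It is injective because $\phi(aN_A) = N_B$ means $(a,n) \in C$ for some $n \in N_B$. Since $(1,n) \in C$, this gives $(a,1) = (a,n)(1,n)^{-1} \in C$, so $a \in N_A$.

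\textbf{Reconstruction.} By construction, $C \subseteq \{(a,b) : \phi(aN_A) = bN_B\}$. For the converse, suppose $\phi(aN_A) = bN_B$. There is some $(a,b_0) \in C$ with $b_0N_B = bN_B$. Then $b = b_0 n$ with $n \in N_B$, and
\[
(a,b) = (a,b_0)(1,n) \in C.
\]

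\textbf{The kernel statement.} The kernel of $\pi_B|_C$ is exactly $N_A \times \{1\}$, and the kernel of $\pi_A|_C$ is $\{1\} \times N_B$. Identifying these with $N_A$ and $N_B$ gives the final sentence of the statement.

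\textbf{Converse.} For completeness, any triple $(N_A, N_B, \phi)$ defines a subgroup, since the condition $\phi(aN_A) = bN_B$ is preserved under products and inverses. This subgroup is subdirect because $\phi$ is surjective and defined on all of $A/N_A$. The two constructions are mutually inverse, which justifies the word ``identified.''
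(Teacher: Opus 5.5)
Your proof is correct, and it is the standard argument for Goursat's lemma; the paper gives no proof of its own and only cites Lang. The kernels of the restricted projections are normal by surjectivity, and $C$ is exactly the preimage of the graph of the induced isomorphism $A/N_A \to B/N_B$.

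Two cosmetic points. First, normality is not the only place you need surjectivity of $\pi_A$: you also use it to define $\phi$ on every coset and to produce $(a,b_0)\in C$ in the reconstruction, which you correctly do. Second, ``mutually inverse'' still needs the one-line check that the triple extracted from $\{(a,b) : \phi(aN_A)=bN_B\}$ is again $(N_A,N_B,\phi)$; recovering $N_A$ there uses injectivity of $\phi$.
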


The subgroup $C$ obtained in Goursat's Lemma is often viewed as the \emph{fiber product} of $A$ and $B$ over the common quotient $Q = A/N_A \cong B/N_B$. Concretely, $C$ consists of exactly those pairs $(a,b)$ whose images in $Q$ agree under the identification induced by $\phi$, thus $C$ synchronizes the actions of $A$ and $B$ through the same quotient. If $N_A = N_B = \{e\}$, then $Q \cong A \cong B$ and $C = \{(a,\phi(a)) : a \in A\}$ is the \emph{diagonal subgroup}, hence $C \cong A \cong B$. At the other extreme, if $N_A = A$ and $N_B = B$, then $Q$ is trivial, the compatibility condition is trivial, and therefore $C = A \times B$.

The automorphism group of a two-orbit graph must be transitive when projecting to one of the orbits.
Thus, it must be a subdirect product of two transitive groups $A,B$.
This leads to the following algorithm to enumerate graphs of size $n$:
For each partition $n = k + (n-k)$ with $1 \le k \le \lfloor n/2 \rfloor$, enumerate all transitive permutation groups of degrees~$k$ and $n-k$.
Then, for each pair of such groups, generate all subdirect products $C$ via Goursat's Lemma, and construct all graphs that are invariant under $C$.
This approach is complete, as any two-orbit graph $G$ is guaranteed to be generated when the graphs invariant under its automorphism group are enumerated.

\section{Optimizations}
When enumerating for larger values of $n$, the number of candidate groups grows rapidly and many groups are duplicates.
In practice, this creates a bottleneck, as the ever-growing runtime necessitates parallelization, leading to the following tradeoff: either write all groups to separate files and deduplicate afterward, or coordinate the tasks so that each deduplicates.
When testing larger instances, storage constraints on our compute cluster made the first option impractical, and coordinating the tasks would be technically challenging.

We therefore identify additional pruning strategies for parallel workers.
An effective strategy is to check for minimality.
Given a candidate two-orbit group $P$ with orbit sizes $k$ and $n-k$, we check minimality in stages.
First, if $|P| = \lcm(k, n-k)$, then $P$ is trivially minimal: any subgroup transitive on an orbit of size $m$ must have order divisible by $m$, so any two-orbit subgroup of $P$ has an order divisible by both $k$ and $n-k$, hence at least $\lcm(k, n-k) = |P|$; no proper subgroup can achieve this.
Otherwise, we search for a proper two-orbit subgroup by testing the derived subgroup, the minimal normal subgroups, and the maximal subgroups of $P$ in turn, checking whether each has order divisible by both $k$ and $n-k$ and acts transitively on both orbits.
If no such witness is found, $P$ is declared minimal.
Although this check incurs considerable computational cost, the resulting reduction is substantial: over $95\%$ of the subdirect products are found to be non-minimal for $n \leq 27$, yielding significantly fewer groups to expand in the later stages.

The necessary computations become expensive when both kernels are trivial, i.e.,\ $N_A = \{e\}$ and $N_B = \{e\}$.
In this case, $C$ is the diagonal subgroup introduced in Section~\ref{sec:groups}, acting simultaneously on both orbits.
Such a diagonal group is never minimal when both $A$ and $B$ are natural symmetric or alternating groups, specifically when $A$ is a natural $S_m$ with $m \geq 3$ or a natural $A_m$ with $m \geq 4$, and likewise for $B$.
In these cases $A$ has a proper transitive subgroup (e.g.\ $A_m < S_m$, or $V_4 < A_4$), whose diagonal copy is then a proper two-orbit subgroup.
We note that $S_2 \cong \Z_2$ and $A_3 \cong \Z_3$ are both minimally transitive, so the conditions $m \geq 3$ and $m \geq 4$ are necessary.

\subsection{Pruning Essential Kernels}
When only enumerating two-orbit groups that have no proper two-orbit subgroups, pruning criteria that discard certain subdirect products can improve the runtime considerably.
An effective approach is to only calculate the subdirect product for essential kernels.

\begin{proposition}
Let $A$ and $B$ be transitive permutation groups on $\Omega_1$ and $\Omega_2$, respectively, let $N_A \trianglelefteq A$ and $N_B \trianglelefteq B$, and let $\phi \colon A/N_A \to B/N_B$ be an isomorphism. Define $C=\{(a,b)\in A\times B : \phi(aN_A)=bN_B\}$.
We call $N_A$ an \emph{essential kernel} in $A$ if there is no proper transitive subgroup $M<A$ such that $N_A M=A$.
If $N_A$ is not essential, then $C$ is not minimal as a two-orbit permutation group on $\Omega_1 \sqcup \Omega_2$.
\end{proposition}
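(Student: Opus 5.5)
Since $N_A$ is not essential, there is a proper transitive subgroup $M<A$ with $N_A M = A$. I will restrict $C$ to the pairs whose first coordinate lies in $M$,
\[
C' = \{(a,b)\in C : a \in M\} = C \cap (M\times B),
\]
and show that $C'$ is a proper subgroup of $C$ that is still transitive on both $\Omega_1$ and $\Omega_2$. Then $C$ is not minimal by definition.

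Note that $C'$ is a subgroup, being an intersection of subgroups of $A\times B$.

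\textbf{Step 1: $C'$ is transitive on $\Omega_1$.} By Goursat's Lemma, the projection of $C$ to $A$ is surjective. For every $a\in A$ we pick $b$ with $bN_B=\phi(aN_A)$, which is possible because $\phi$ is onto $B/N_B$. Hence the projection of $C'$ to the first factor is all of $M$. Since $M$ is transitive on $\Omega_1$, so is $C'$.

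\textbf{Step 2: $C'$ is transitive on $\Omega_2$.} This is the key step. I claim the projection of $C'$ to $B$ is all of $B$. Given $b\in B$, the coset $bN_B$ equals $\phi(aN_A)$ for some $a\in A$. Since $A=N_AM=MN_A$, the coset $aN_A$ contains an element $m\in M$. Then $\phi(mN_A)=bN_B$, so $(m,b)\in C'$. Thus $C'$ surjects onto $B$, and $B$ is transitive on $\Omega_2$.

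\textbf{Step 3: $C'$ is proper in $C$.} Choose $a\in A\setminus M$ and a compatible $b$; then $(a,b)\in C\setminus C'$.

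The only subtle point is Step 2, where the hypothesis $N_AM=A$ is exactly what guarantees that every coset of $N_A$ meets $M$. The rest is bookkeeping with Goursat's description of $C$.
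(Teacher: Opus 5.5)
Your proof is correct and is essentially the paper's argument. Both take $C \cap (M \times B)$, lift cosets to get surjectivity onto $M$, use $N_A M = A$ (so every coset of $N_A$ meets $M$) to get surjectivity onto $B$, and deduce properness from $M < A$, which you make slightly more explicit than the paper does. A small correction to Step 1: finding $b$ with $bN_B = \phi(aN_A)$ only uses that $\phi(aN_A)$ is a nonempty coset, not that $\phi$ is onto; surjectivity of $\phi$ is what you actually use in Step 2.
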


\begin{proof}
Suppose $M < A$ is a proper transitive subgroup with $N_A M = A$, and let $C_M := C \cap (M \times B)$.
Since $M < A$, $C_M$ is a proper subgroup of $C$.

We show $C_M$ projects surjectively onto both $M$ and $B$.
For surjectivity onto $M$: given any $m \in M$, choose $b \in B$ with $\phi(mN_A) = bN_B$; then $(m,b) \in C_M$.
For surjectivity onto~$B$: given any $b \in B$, since $C$ projects onto $B$ there exists $a \in A$ with $(a,b) \in C$. Write $a = nm$ with $n \in N_A$ and $m \in M$ (possible since $N_A M = A$). Then $aN_A = mN_A$, so $\phi(mN_A) = \phi(aN_A) = bN_B$, giving $(m,b) \in C_M$.

Since the action of $C_M$ on $\Omega_1 \sqcup \Omega_2$ factors through its projections, and $M$ is transitive on $\Omega_1$ and $B$ is transitive on $\Omega_2$, $C_M$ is transitive on each orbit.
Hence $C_M$ is a proper two-orbit subgroup of $C$, so $C$ is not minimal.
\end{proof}

This eliminates over $90\%$ of kernel pairs for $n \leq 27$.

\subsection{Deduplicating Orbital Configurations}
The parallel workers each produce a list of minimal two-orbit groups, and different workers can independently discover groups that give rise to the same graphs.
Rather than deduplicating at the graph level, we first deduplicate at the level of \emph{orbital configurations} (the partition of ordered vertex pairs into orbitals), since two groups with the same orbital configuration generate exactly the same set of graphs.
This deduplication is done in two rounds.

\paragraph{Round 1 (label-preserving).}
Two groups have the same orbital configuration if their orbital partitions are identical as sets of directed pair sets.
We compute a canonical signature for each group's orbitals (by sorting pairs within each orbital, then sorting the list of orbitals) and discard duplicates by signature.
This is a pure hash lookup that eliminates most duplicates.

\paragraph{Round 2 (scheme isomorphism).}
Two orbital configurations may generate the same graphs even if their signatures differ, namely when one configuration can be obtained from the other by simultaneously relabeling the vertices and the orbitals.
Formally, we check whether there exists a vertex permutation $\psi \colon V \to V$ and an orbital relabeling $\sigma$ such that
\[
  \mathcal{O}_2(\psi(i),\psi(j)) = \sigma(\mathcal{O}_1(i,j)) \quad \text{for all } i \neq j,
\]
where $\mathcal{O}_k(i,j)$ denotes the orbital index of a pair $(i,j)$ in configuration $k$.
We search for such a pair $(\psi, \sigma)$ by backtracking on the assignment of vertices, propagating implied~$\sigma$ constraints incrementally and pruning on contradiction.
After assigning two or three vertices, the constraints on $\sigma$ are typically fully determined, making the search tree very shallow in practice.

\section{Conclusion}
Applying the group-theoretic pipeline described above, we obtained, for the first time, a complete enumeration of all connected two-orbit graphs of up to $27$ vertices, yielding $10{,}094{,}721$ graphs in total.
The complete set of graphs is publicly available~\cite{zenodo}.
Table~\ref{tab:edge_orbits} provides the breakdown by number of edge orbits.
For orders where independent data is available, our results are confirmed by nauty (up to $n=11$), SMS (up to $n=16$), and the counts reported by Royle~\cite{basic_fowler_pisanski}.

\begin{table}[p]
\thisfloatpagestyle{empty}
\centering
\rotatebox{90}{%
\setlength{\tabcolsep}{2.5pt}
\small
\begin{tabular}{@{}rrrrrrrrrrrrrrrrrrrrrrrrrr@{}}
\toprule
  $k \setminus n$ &        3 &        4 &        5 &        6 &        7 &        8 &        9 &       10 &       11 &       12 &       13 &       14 &       15 &       16 &       17 &       18 &       19 &       20 &       21 &       22 &       23 &       24 &       25 &       26 &       27 \\
\midrule
       1 &        1 &        1 &        2 &        4 &        3 &        6 &        5 &        5 &        5 &        8 &        6 &        8 &       15 &       11 &        8 &       14 &        9 &       21 &       24 &       16 &       11 &       31 &       23 &       18 &       30 \\
       2 &        0 &        2 &        4 &       18 &       12 &       46 &       29 &       43 &       38 &       98 &       59 &      103 &      151 &      190 &      110 &      285 &      138 &      420 &      341 &      315 &      203 &      869 &      433 &      449 &      628 \\
       3 &        0 &        0 &        0 &       12 &        4 &       54 &       34 &       74 &       41 &      270 &       92 &      305 &      402 &      718 &      246 &     1341 &      378 &     2117 &     1332 &     1624 &      708 &     6279 &     1961 &     2968 &     3743 \\
       4 &        0 &        0 &        0 &        2 &        0 &       30 &       12 &       52 &       16 &      331 &       55 &      363 &      514 &     1352 &      266 &     2903 &      458 &     5318 &     2573 &     3621 &     1137 &    22524 &     4379 &     8593 &    10968 \\
       5 &        0 &        0 &        0 &        0 &        0 &        6 &        4 &       17 &        3 &      284 &       20 &      258 &      469 &     1738 &      203 &     4359 &      434 &     9211 &     3725 &     5488 &     1437 &    56544 &     7739 &    16838 &    23380 \\
       6 &        0 &        0 &        0 &        0 &        0 &        2 &        0 &        1 &        0 &      183 &        6 &      129 &      345 &     1879 &      118 &     5130 &      327 &    12453 &     4459 &     6462 &     1502 &   112054 &    11823 &    26114 &    41411 \\
       7 &        0 &        0 &        0 &        0 &        0 &        0 &        0 &        0 &        0 &      110 &        0 &       50 &      251 &     1831 &       60 &     5496 &      196 &    14313 &     4999 &     6614 &     1316 &   190905 &    16078 &    35084 &    66769 \\
       8 &        0 &        0 &        0 &        0 &        0 &        0 &        0 &        0 &        0 &       53 &        0 &       13 &      152 &     1787 &       23 &     5305 &       97 &    14885 &     5255 &     6056 &      969 &   292831 &    19694 &    42043 &    99346 \\
       9 &        0 &        0 &        0 &        0 &        0 &        0 &        0 &        0 &        0 &       22 &        0 &        2 &       87 &     1627 &        7 &     4714 &       41 &    14377 &     5316 &     4993 &      647 &   416618 &    22044 &    45677 &   139401 \\
      10 &        0 &        0 &        0 &        0 &        0 &        0 &        0 &        0 &        0 &        3 &        0 &        0 &       30 &     1427 &        1 &     3597 &       13 &    13039 &     4845 &     3683 &      382 &   555666 &    22452 &    45324 &   182434 \\
      11 &        0 &        0 &        0 &        0 &        0 &        0 &        0 &        0 &        0 &        0 &        0 &        0 &        8 &     1086 &        0 &     2365 &        2 &    11191 &     3992 &     2411 &      191 &   694869 &    20502 &    40992 &   221082 \\
      12 &        0 &        0 &        0 &        0 &        0 &        0 &        0 &        0 &        0 &        0 &        0 &        0 &        0 &      734 &        0 &     1213 &        0 &     9054 &     2763 &     1406 &       76 &   809588 &    16446 &    33697 &   242741 \\
      13 &        0 &        0 &        0 &        0 &        0 &        0 &        0 &        0 &        0 &        0 &        0 &        0 &        0 &      392 &        0 &      499 &        0 &     6826 &     1626 &      721 &       20 &   872753 &    11346 &    25145 &   239567 \\
      14 &        0 &        0 &        0 &        0 &        0 &        0 &        0 &        0 &        0 &        0 &        0 &        0 &        0 &      169 &        0 &      128 &        0 &     4666 &      747 &      318 &        2 &   863949 &     6566 &    16956 &   208850 \\
      15 &        0 &        0 &        0 &        0 &        0 &        0 &        0 &        0 &        0 &        0 &        0 &        0 &        0 &       49 &        0 &       21 &        0 &     2832 &      277 &      117 &        0 &   780210 &     3118 &    10260 &   160119 \\
      16 &        0 &        0 &        0 &        0 &        0 &        0 &        0 &        0 &        0 &        0 &        0 &        0 &        0 &        9 &        0 &        0 &        0 &     1457 &       66 &       30 &        0 &   639027 &     1179 &     5482 &   106253 \\
      17 &        0 &        0 &        0 &        0 &        0 &        0 &        0 &        0 &        0 &        0 &        0 &        0 &        0 &        0 &        0 &        0 &        0 &      624 &       12 &        5 &        0 &   471486 &      340 &     2524 &    60839 \\
      18 &        0 &        0 &        0 &        0 &        0 &        0 &        0 &        0 &        0 &        0 &        0 &        0 &        0 &        0 &        0 &        0 &        0 &      204 &        0 &        0 &        0 &   311318 &       69 &      959 &    29385 \\
      19 &        0 &        0 &        0 &        0 &        0 &        0 &        0 &        0 &        0 &        0 &        0 &        0 &        0 &        0 &        0 &        0 &        0 &       48 &        0 &        0 &        0 &   182116 &        8 &      288 &    11915 \\
      20 &        0 &        0 &        0 &        0 &        0 &        0 &        0 &        0 &        0 &        0 &        0 &        0 &        0 &        0 &        0 &        0 &        0 &        6 &        0 &        0 &        0 &    93435 &        0 &       60 &     3857 \\
      21 &        0 &        0 &        0 &        0 &        0 &        0 &        0 &        0 &        0 &        0 &        0 &        0 &        0 &        0 &        0 &        0 &        0 &        0 &        0 &        0 &        0 &    41330 &        0 &        8 &      993 \\
      22 &        0 &        0 &        0 &        0 &        0 &        0 &        0 &        0 &        0 &        0 &        0 &        0 &        0 &        0 &        0 &        0 &        0 &        0 &        0 &        0 &        0 &    15463 &        0 &        0 &      173 \\
      23 &        0 &        0 &        0 &        0 &        0 &        0 &        0 &        0 &        0 &        0 &        0 &        0 &        0 &        0 &        0 &        0 &        0 &        0 &        0 &        0 &        0 &     4716 &        0 &        0 &       21 \\
      24 &        0 &        0 &        0 &        0 &        0 &        0 &        0 &        0 &        0 &        0 &        0 &        0 &        0 &        0 &        0 &        0 &        0 &        0 &        0 &        0 &        0 &     1120 &        0 &        0 &        0 \\
      25 &        0 &        0 &        0 &        0 &        0 &        0 &        0 &        0 &        0 &        0 &        0 &        0 &        0 &        0 &        0 &        0 &        0 &        0 &        0 &        0 &        0 &      180 &        0 &        0 &        0 \\
      26 &        0 &        0 &        0 &        0 &        0 &        0 &        0 &        0 &        0 &        0 &        0 &        0 &        0 &        0 &        0 &        0 &        0 &        0 &        0 &        0 &        0 &       14 &        0 &        0 &        0 \\
\midrule
  \textbf{total} &        1 &        3 &        6 &       36 &       19 &      144 &       84 &      192 &      103 &     1362 &      238 &     1231 &     2424 &    14999 &     1042 &    37370 &     2093 &   123062 &    42352 &    43880 &     8601 &  7435895 &   166200 &   359479 &  1853905 \\
\bottomrule
\end{tabular}
}
\caption{Number of connected two-orbit graphs by number of edge orbits ($k$) and graph size ($n$)}
\label{tab:edge_orbits}
\end{table}
 
The computation of the groups and the enumeration dominate the runtime, but both steps admit further optimization.
For the group enumeration, it may be beneficial to omit the minimality computation when essential kernel pruning already eliminates most groups.
Furthermore, additional lightweight criteria certifying non-minimality may exist.

The most promising avenue for further optimization is graph generation.
For graphs with $30$ orbitals, for instance, we must generate and canonically label more than a billion graphs, the vast majority of which are duplicates.
One may exploit isomorphisms of orbital-respecting edge colorings to prune the search tree.

\end{document}